\titleformat*{\section}{\large\bfseries}
\titleformat*{\subsection}{\normalsize\bfseries}
\titleformat*{\subsubsection}{\normalsize\em}
\newtheorem{theorem}{Theorem}[section]
\newtheorem{lemma}[theorem]{Lemma}
\newtheorem{corollary}[theorem]{Corollary}
\theoremstyle{remark}
\newtheorem{definition}[theorem]{Definition}
\newtheorem{remark}[theorem]{Remark}
\newtheorem{example}[theorem]{Example}
\title{Dependent choice as a termination principle}
\author{Thomas Powell}
\date{Preprint, \today}
\begin{document}

\maketitle

\begin{abstract}
We introduce a new formulation of the axiom of dependent choice that can be viewed as an abstract termination principle, which generalises the recursive path orderings used to establish termination of rewrite systems. We consider several variants of our termination principle, and relate them to general termination theorems in the literature.
\end{abstract}

\section{Introduction}
\label{sec-intro}

Path orderings are a technique for proving that programs terminate. They include the well known multiset \cite{Dershowitz(1982.0)} and Knuth-Bendix \cite{KnuBen(1970.0)} orderings, and are central to the theory of term rewriting.

In order to prove that path orderings are wellfounded, one typically appeals to the axiom of dependent choice in some form. Traditionally, this is via Kruskal's theorem and a clever combinatorial idea of Nash-Williams known as the minimal bad sequence argument, as in \cite{FerZan(1995.0)}, although this can be given a constructive flavour by using bar induction instead \cite{GLar(2001.0)}. 

However, results in the other direction are rare. Dependent choice is often used as a convenient mathematical tool without considering whether or not wellfoundedness of the path order could be established in a weaker theory. Indeed, as shown by Buchholz \cite{Buchholz(1995.0)}, the termination of fixed term rewrite systems via recursive path orderings can in fact even be proven in a weak fragment of Peano arithmetic.

In this note, we single out a general termination principle $\TP$ phrased in a higher-order setting and closely related to open induction \cite{Raoult(1988.0)}, and prove that it is instance-wise equivalent to dependent choice over Peano arithmetic in all finite types. We go on to explore several variants of this principle.

There are several motivating factors behind this work. Firsly, the fact that dependent choice can be viewed as a general wellfoundedness principle akin to recursive path orders is of theoretical interest in its own right, and by making this precise we introduce variants of dependent choice which have deep links to program termination. In this respect our paper is close in spirit to \cite{Mellies(1998.0)}, which establishes a two-way connection between a strong logical principle on one hand and a termination argument on the other.

In addition, by looking at termination on a high level we are able to clarify the relationship between the various generalisations of path orderings one finds in the literature. One variant of our termination principle in particular is based on the notion of a simplification order, and allows us to prove the abstract theorem of Goubault-Larrecq \cite{GLar(2001.0)} as a direct corollary.

Finally, because we present our termination principle as a formal extension of Peano arithmetic in all finite types, which can be given a direct computational interpretation in G\"{o}del's system T via standard proof interpretations, we take a step towards connecting path orderings with higher-order recursion, which is something we briefly mention in Section \ref{sec-conc}. This would extend work begun in \cite{MosPow(2015.0)} and continued in \cite{Powell(2019.1)}, where derivation trees of finitely branching path orders are encoded via terms of system T.

\section{Preliminaries}
\label{sec-prelim}

We start by giving a brief overview of the recursive path ordering: Though this is not strictly necessary for the results that follows, it helps motivate them. We then define the main formal systems we will work in for the remainder of the paper.

\subsection{The recursive path order}
\label{sec-prelim-path}

Let $T$ denote the set of first-order terms build from some finite set of function symbols $F$ and countable set of variables $X$. Suppose that $\succ_F$ is a well-founded order on $F$. For each function symbol $f$ with arity $n$, we assign a \emph{lifting} $\succ_f$, which is a relation on $T^n$ satisfying the property that for any $A\subseteq T$, if $\succ$ is wellfounded on $A$ then $\succ_f$ is wellfounded on $A^n$. The recursive path order $\rpo$ on $T$ is defined recursively as follows: $t=f(t_1,\ldots,t_n)\rpo s$ if either
\begin{enumerate}[(i)]

\item $t_i\rpoeq s$ for some $i=1,\ldots,n$,

\item $s=g(s_1,\ldots,s_m)$ for some $f\succ_F g$, and $t\rpo s_i$ for all $i=1,\ldots,m$,

\item $s=f(s_1,\ldots,s_n)$, $t\rpo s_i$ for all $i=1,\ldots,n$, and $(t_1,\ldots,t_n)\succ_{f}(s_1,\ldots,s_n)$,

\end{enumerate}
In the case where $\succ_f$ is the multiset resp. lexicographic extension of $\succ$, we obtain (variants of) the well-known multiset resp. lexicographic path orderings. The following results are standard in the theory of term rewriting (see \cite{FerZan(1995.0)} for example).

\begin{theorem}\label{res-mpo}The multiset and lexicographic path orders are closed under substitution and contexts, and are well-founded.\end{theorem}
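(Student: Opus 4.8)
The plan is to establish the three closure properties and well-foundedness separately, handling well-foundedness by the minimal bad sequence argument combined with Kruskal's theorem (or, if one wants a self-contained treatment, by an induction on the structure of terms together with the lifting hypothesis).

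\medskip

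\textbf{Closure under substitution and contexts.} First I would verify closure under substitution: if $t \rpo s$ then $t\sigma \rpo s\sigma$ for every substitution $\sigma$. This goes by induction on the definition of $\rpo$, i.e.\ on the derivation of $t = f(t_1,\dots,t_n) \rpo s$. In case (i) one has $t_i \rpoeq s$, so by the induction hypothesis $t_i\sigma \rpoeq s\sigma$, and since $t_i\sigma$ is a subterm of $t\sigma$ we are done. Cases (ii) and (iii) are similar: apply the induction hypothesis to each of the premises $t \rpo s_j$, and in case (iii) additionally use that the lifting $\succ_f$ is defined on tuples from $T$ and is preserved appropriately (here one needs the standard fact that the lexicographic/multiset liftings are stable, which I would isolate as a small lemma about $\succ_f$). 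Closure under contexts — if $t \rpo s$ then $C[t] \rpo C[s]$ for every context $C$ — reduces by an easy induction on $C$ to the one-step case $f(\dots,t,\dots) \rpo f(\dots,s,\dots)$, which follows from case (iii): the subterm premises hold because each unchanged argument is a strict subterm hence below by (i), the changed argument $t$ satisfies $f(\dots,t,\dots)\rpo t \rpo s$, and the tuple comparison holds because $\succ_f$ is monotone in each coordinate.

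\medskip

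\textbf{Well-foundedness.} This is the substantive part and the main obstacle. The cleanest route is the Nash-Williams minimal bad sequence argument: assume for contradiction that there is an infinite descending chain $t^{(0)} \rpo t^{(1)} \rpo \cdots$, and using dependent choice select one that is \emph{minimal} in the sense that $t^{(0)}$ has minimal size among heads of bad sequences, then $t^{(1)}$ minimal given $t^{(0)}$, and so on. By Kruskal's tree theorem the sequence of terms $(t^{(i)})$ contains indices $i < j$ with $t^{(i)}$ embeddable in $t^{(j)}$; one then derives a contradiction by a case analysis on the clause of $\rpo$ witnessing $t^{(i)} \rpo t^{(i+1)}$, using minimality to replace the relevant subterm and produce a strictly smaller bad sequence. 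The delicate points are: (a) the precedence $\succ_F$ is well-founded, so the outermost function symbols cannot strictly decrease infinitely often, which lets one assume the head symbol is eventually constant, say $f$; (b) once the head is fixed as $f$, an infinite $\rpo$-chain forces, via clause (iii) infinitely often, an infinite $\succ_f$-descent on tuples whose entries are drawn from a set on which $\rpo$ is well-founded by the induction hypothesis — but $\succ_f$ is a lifting, hence well-founded on such tuples, contradiction; (c) clause (i) can only be invoked finitely often along the chain because it strictly decreases term size. Assembling (a)–(c) with the minimal bad sequence bookkeeping gives the contradiction. I expect step (b), namely correctly setting up the set $A$ on which well-foundedness of $\rpo$ is already known and checking that all tuple entries arising in the chain lie in $A$, to be the trickiest bit of the argument; the rest is standard rewriting-theory bookkeeping. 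Since this theorem is stated as a known result, I would in fact cite \cite{FerZan(1995.0)} for the full details and only sketch the structure above.
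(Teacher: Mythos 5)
The paper gives no proof of this theorem: it is stated as a standard result with a pointer to Ferreira and Zantema \cite{FerZan(1995.0)}, so your decision to cite that source and only sketch the structure is exactly what the text does, and your treatment of closure under substitution and contexts is the standard one and is fine. On well-foundedness, however, two points in your sketch would need repair before it could be written out. First, you conflate two distinct routes: the Kruskal-based proof shows that $\rpo$ is a simplification order and hence contains homeomorphic embedding, so Kruskal's theorem applied to an infinite descending chain yields $i<j$ with $t^{(i)}$ embedded in $t^{(j)}$, whence $t^{(j)}\rpoeq t^{(i)}$, contradicting irreflexivity and transitivity --- no case analysis on the clauses of $\rpo$ is needed there. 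The direct Nash--Williams/Ferreira--Zantema minimal bad sequence argument, by contrast, never invokes Kruskal's theorem: the contradiction comes entirely from the case analysis you outline in (a)--(c), with the set $A$ of immediate subterms of the minimal bad sequence playing the role of the set on which $\rpo$ is already well-founded so that the lifting $\succ_f$ is well-founded on $A^n$. The latter is the argument this paper is abstracting (it is the one sketched in Section 2.1 and carried out abstractly via $\DC$ in the Appendix), so it is the one to follow; presenting the two as a single combined argument muddies both. Second, your reason in (c) that clause (i) can recur only finitely often because it ``strictly decreases term size'' is not correct: clause (i) gives $t^{(i)}_j\rpoeq t^{(i+1)}$ for an immediate subterm $t^{(i)}_j$ of $t^{(i)}$, which places no bound whatsoever on the size of $t^{(i+1)}$. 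The correct argument is that such a step produces a bad sequence $t^{(i)}_j, t^{(i+1)}, t^{(i+2)},\ldots$ beginning at a proper subterm of $t^{(i)}$, contradicting the minimality of the chosen bad sequence. With these two repairs your outline coincides with the standard proof that the paper is citing.
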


\begin{corollary}\label{res-mpo-trs}Let $\R$ be finite a term rewrite system such that whenever $l\to r$ is a rule of $\R$ then $l\rpo r$. Then the rewrite relation generated by $\R$ is well-founded.\end{corollary}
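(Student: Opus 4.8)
The plan is to show that the one-step rewrite relation $\to_\R$ generated by $\R$ is contained in $\rpo$, and then to invoke Theorem \ref{res-mpo} together with the elementary observation that any subrelation of a wellfounded relation is itself wellfounded. Recall that $\to_\R$ is the least relation on $T$ such that (a) $l\sigma\to_\R r\sigma$ whenever $l\to r$ is a rule of $\R$ and $\sigma$ is a substitution, and (b) if $s\to_\R t$ then $f(\ldots,s,\ldots)\to_\R f(\ldots,t,\ldots)$ for every function symbol $f$ and every argument position; equivalently, $s\to_\R t$ iff $s=C[l\sigma]$ and $t=C[r\sigma]$ for some context $C$, rule $l\to r$ of $\R$ and substitution $\sigma$.

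First I would handle the generating steps: by hypothesis $l\rpo r$ for every rule $l\to r$ of $\R$, and since $\rpo$ (taken to be the multiset or lexicographic path order) is closed under substitution by Theorem \ref{res-mpo}, it follows that $l\sigma\rpo r\sigma$. Next, closure of $\rpo$ under contexts, again supplied by Theorem \ref{res-mpo}, lifts this to $C[l\sigma]\rpo C[r\sigma]$ for an arbitrary context $C$ — formally by a straightforward induction on the construction of $C$. Together these two points give $\to_\R\ \subseteq\ \rpo$. Finally, since $\rpo$ is wellfounded by Theorem \ref{res-mpo}, an infinite descending $\to_\R$-chain would be an infinite descending $\rpo$-chain, which cannot exist; hence $\to_\R$ is wellfounded.

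There is essentially no hard step here: the corollary is a routine consequence of Theorem \ref{res-mpo}. The only point requiring a little care is matching the two closure properties in Theorem \ref{res-mpo} precisely against the inductive generation of $\to_\R$ from its rules, i.e. verifying that ``closed under substitution and contexts'' is exactly what is needed to pass from $l\rpo r$ on the rules to $C[l\sigma]\rpo C[r\sigma]$ on all rewrite steps. Note in particular that transitivity of $\rpo$ plays no role, since we only ever compare a single rewrite step $s\to_\R t$.
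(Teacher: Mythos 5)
Your proof is correct and is exactly the standard argument the paper leaves implicit (the corollary is stated without proof, citing the term-rewriting literature): closure under substitution and contexts gives $C[l\sigma]\rpo C[r\sigma]$ for every rewrite step, so $\to_\R\ \subseteq\ \rpo$, and wellfoundedness is inherited by subrelations. Your observation that transitivity is not needed for single-step containment is also accurate.
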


Recursive path orders allow us to verify that programs defined by a set of rewrite rules are terminating. For example, implementations of many basic primitive recursive functions can be dealt with by the multiset path ordering, while multiply recursive functions such as the Ackermann function are typically reducing under the lexicographic path ordering.

The key feature of path orderings of this kind is that they allow us to prove that \emph{recursively} defined programs terminate. This is the role played by clause (iii) above: Roughly speaking, if $f(t_1,\ldots,t_n)$ only evaluates to a term which contains recursive calls of the form $f(s_1,\ldots,s_n)$ for $(t_1,\ldots,t_n)\succ_f (s_1,\ldots,s_n)$, then rewrite sequences starting from $f(t_1,\ldots,t_n)$ are contained in $\rpo$. Very informally, the reason that $\rpo$ itself is well-founded relies on the fact that whenever we have a sequence of recursive calls
\begin{equation*}f(t_1,\ldots,t_n)\rpo f(s_1,\ldots,s_n)\rpo\ldots\end{equation*} 
where the $t_i, s_j$ are are well-founded with respect to $\rpo$, then that sequence must be finite since $\succ_f$ is a lifting. Such sequences are an example of what we will call \emph{minimal} sequences, in the sense that we assume that all subterms $t_i$ of elements in the sequence are well-founded. 

Minimal sequences, which are a crucial element in most standard proofs that path orderings are well-founded, constitute an idea far more general than the world of path orderings, and the purpose of this paper is to explore it on a much more abstract level. 

\subsection{Extensions of Peano arithmetic in all finite types}
\label{sec-prelim-PA}

The finite types are defined inductively as follows: $\nat$ and $\bool$ are types, and if $\rho$ and $\tau$ are types then so is the function space $\rho\to\tau$ (which we sometimes write as $\tau^\rho$), the cartesian product $\rho\times\tau$ and finite sequences $\rho^\ast$.  The basic logical system we work in is the theory $\PAomega$ of Peano arithmetic in all finite types, which is just the usual first order theory of Peano arithmetic but now with variables and quantifiers for all types. It also includes the usual combinators for the lambda calculus together with constants for primitive recursion in all types. For a more detailed outline of the kind of theory we have in mind see e.g. \cite{Troelstra(1973.0)} - the precise set up is not important here. We make use of the following notation: For $\alpha,\beta:\nat\to\rho$, $a,b:\rho^\ast$ and $x:\rho$
\begin{itemize}

\item $\initSeg{\alpha}{n}:=\seq{\alpha_0,\ldots,\alpha_{n-1}}$ denotes the initial segment of $\alpha$ of length $n$. The empty sequence (for any type) is denoted $\seq{}$,

\item $|a|:\nat$ denotes the length of $a$,

\item $a\ast x:=\seq{a_0,\ldots,a_{k-1},x}$ the one element extension of $a$ with $x$, and similarly $a\ast b$, $a\ast\beta$ the extension of $a$ with the finite resp. infinite sequence $b$ resp. $\beta$.

\item $a\blacktriangleleft \alpha$ denotes that $\alpha$ is an extension of $a$ i.e. $(\forall i<|a|)(a_i=\alpha_i)$,

\item similarly, $a\blacktriangleleft b$ denotes that $a$ is a (not necessarily strict) prefix of $b$ i.e. $|a|\leq |b|\wedge (\forall i<|a|)(a_i=b_i)$.

\end{itemize}
The axiom schema of dependent choice of type $\rho$ is given by
\begin{equation*}
\DC_\rho \ \colon \ A([])\wedge \forall a^{\rho^\ast}(A(a)\to \exists x^\rho A(a\ast x))\to \exists \alpha^{\nat\to\rho}\forall n A(\initSeg{\alpha}{n})
\end{equation*}
where $A$ is some formula in the language of $\PAomega$. Closely related to dependent choice is bar induction, which in this paper will be given as the following schema of relativised bar induction in all finite types:
\begin{equation*}\BI_\rho \ \colon \ \left\{\begin{aligned}&S(\seq{})\\ &\wedge(\forall\alpha^{0\to\rho}\in S)(\exists n) P(\initSeg{\alpha}{n})\\ &\wedge (\forall a^{\rho^\ast}\in S)((\forall x^\rho)(S(a\ast x)\to P(a\ast x))\to P(a))\end{aligned}  \right\}\to P(\seq{}),\end{equation*} 
where $P$ and $S$ are formulas in the language of $\PAomega$, $a\in S$ is shorthand for $S(a)$ and  $\alpha\in S$ shorthand for $(\forall n^0) S(\initSeg{\alpha}{n})$. We denote by $\PAomega+\DC$ the extension of $\PAomega$ with the axiom schemata $\DC_\rho$ for all finite types, and similarly for $\PAomega+\BI$.

Let $\rhd:\rho\times\rho\to\bool$ be a binary relation, and let 
\begin{equation*}
\TI_\rho[\rhd] \ \colon \ \forall x^\rho (\forall y(x\rhd y\to A(y))\to A(x))\to \forall x A(x)
\end{equation*}
denote the usual principle of transfinite induction over $\rhd$. We denote by $\rlex$ the lexicographic extension of $\rhd$ to infinite sequences of type $\nat\to\rho$ i.e.
\begin{equation*}
\alpha\rlex\beta:\equiv \exists n^\nat (\initSeg{\alpha}{n}=\initSeg{\beta}{n}\wedge \alpha(n)\rhd \beta(n)).
\end{equation*}
Open induction over $\rlex$ is given by the schema
\begin{equation*}
\OI_\rho[\rhd] \ \colon \ \forall \alpha^{\rho^\nat}(\forall \beta(\alpha\rlex \beta\to U(\beta))\to U(\alpha))\to \forall \alpha U(\alpha)
\end{equation*}
where now $U(\alpha)$ is a so-called \emph{open predicate}, which we define to be one of the form $\exists n^\nat B(\initSeg{\alpha}{n})$ for some arbitrary formula $B(s)$ on $\rho^\ast$.
\begin{theorem}
\label{thm-dcs}
The following are provable over $\PAomega$ and hold instance-wise:
\begin{itemize}

\item $\BI_\rho\leftrightarrow \DC_\rho$

\item $\OI_{\rho^\ast\times\bool}[\rhd]\to \DC_\rho$, where $\rhd$ is the relation on $\rho^\ast\times\bool$ defined by $(x,b)\rhd (x',b')$ iff $b=1$ and $b'=0$.

\end{itemize}
Over $\PAomega+\TI_\rho[\rhd]$ we also have
\begin{itemize}

\item $\DC_\rho\to \OI_\rho[\rhd]$

\end{itemize}

\end{theorem}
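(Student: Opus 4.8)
The plan is to treat the three items independently. The equivalence $\BI_\rho\leftrightarrow\DC_\rho$ is the standard argument, which I would dispatch quickly. For $\DC_\rho\to\BI_\rho$, assume the three hypotheses of $\BI_\rho$ together with $\neg P(\seq{})$ and apply $\DC_\rho$ to $A(a):\equiv S(a)\wedge\neg P(a)$: the base case holds by $S(\seq{})$ and $\neg P(\seq{})$, the extension step is the contrapositive of the hereditary hypothesis, and the $\alpha$ produced, with $\forall n\,(S(\initSeg{\alpha}{n})\wedge\neg P(\initSeg{\alpha}{n}))$, contradicts the bar hypothesis. For $\BI_\rho\to\DC_\rho$, given the hypotheses of $\DC_\rho$, instantiate $\BI_\rho$ with $S:\equiv\top$ and $P(a):\equiv\neg A(a)$: then the hereditary hypothesis of $\BI_\rho$ is precisely the extension hypothesis of $\DC_\rho$, the bar hypothesis is the negation of the $\DC_\rho$-conclusion, so $\BI_\rho$ yields $\neg A(\seq{})$, contradicting $A(\seq{})$, and we conclude classically. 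Both directions are instance-wise and remain within $\PAomega$.

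The second item is the substantial one. Fix an instance $A$ of $\DC_\rho$, assume $A(\seq{})$ and $\forall a\,(A(a)\to\exists x\,A(a\ast x))$, and argue by contradiction from $\forall\alpha\,\exists n\,\neg A(\initSeg{\alpha}{n})$. I would read a sequence $\gamma:\nat\to\rho^\ast\times\bool$, with $\gamma(n)=(a_n,b_n)$, as a partial $\DC$-run with a provisional tail: the bit-$0$ positions are meant to form an initial segment $[0,k)$ of $\nat$ on which the $a_i$ record the successive one-step extensions of a finite $A$-run, while from position $k$ onwards everything carries bit $1$ and is junk. Let $B(s)$ express that the finite sequence $s$ is a \emph{bad prefix}: either its bit-$0$ part is not of this coherent shape, or some run segment it records fails $A$. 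Set $U(\gamma):\equiv\exists n\,B(\initSeg{\gamma}{n})$, an open predicate. The point of the relation $\rhd$ is that $\gamma\rlex\delta$ holds precisely when $\delta$ agrees with $\gamma$ up to some position carrying bit $1$ in $\gamma$, carries bit $0$ there in $\delta$ --- with a \emph{completely free} choice of the $\rho^\ast$-component at that position --- and is arbitrary afterwards; so descending along $\rlex$ is exactly ``commit one further run element''.

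To verify that $U$ is progressive I would split on the shape of $\gamma$. If $\gamma$ is incoherent, or records a segment failing $A$, then $B$ already fires on a prefix, so $U(\gamma)$. If $\gamma$ is coherent, records only $A$-satisfying segments, and carries no bit $1$, then it codes an infinite $A$-branch, contradicting the standing assumption, so $U(\gamma)$ holds ex falso. Otherwise $\gamma$ is coherent with $A$-good record and first bit $1$ at some position $k$, recording a run $a$ with $A(a)$; performing the $\rlex$-descent at position $k$, the open-induction hypothesis supplies $U(\gamma_{a\ast x})$ for the coherent one-step extension recording $a\ast x$, for every $x$, and one computes that $U(\gamma_{a\ast x})$ can only hold via $\neg A(a\ast x)$, so $\forall x\,\neg A(a\ast x)$, contradicting $A(a)\to\exists x\,A(a\ast x)$; again $U(\gamma)$ holds ex falso. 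Hence $U$ is progressive, $\OI_{\rho^\ast\times\bool}[\rhd]$ yields $\forall\gamma\,U(\gamma)$, and applying this to the constant sequence $\gamma^\ast:=\lambda n.(\seq{},1)$ --- a coherent sequence with empty record, no finite prefix of which is a bad prefix, so for which $U(\gamma^\ast)$ is provably false --- is the desired contradiction; classically this gives $\exists\alpha\,\forall n\,A(\initSeg{\alpha}{n})$.

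For $\DC_\rho\to\OI_\rho[\rhd]$ over $\PAomega+\TI_\rho[\rhd]$ I would run the classical minimal-bad-sequence argument: given a progressive open $U(\alpha)=\exists n\,B(\initSeg{\alpha}{n})$ and some $\alpha_0$ with $\neg U(\alpha_0)$, apply $\DC_\rho$ to the predicate stating that $s$ starts a $\rlex$-least bad sequence --- $s$ admits a tail $\sigma$ with $\neg U(s\ast\sigma)$, and each entry $s_i$ is $\rhd$-minimal among the values admitting such a tail over the common prefix, existence of this minimal witness at the extension step being exactly where $\TI_\rho[\rhd]$ enters. The $\beta$ so obtained has every $\rlex$-smaller sequence satisfying $U$, so progressivity gives $U(\beta)$, while the first clause forces $\neg B(\initSeg{\beta}{n})$ for all $n$ (evaluate a bad tail-witness at length $n$), i.e.\ $\neg U(\beta)$ --- a contradiction. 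I expect the main obstacle to be the coding in the second item: one must arrange the finite bookkeeping so that each $\rlex$-descent performs exactly one choice step, so that $B$ fires on every genuinely flawed $\gamma$ while sparing $\gamma^\ast$, and so that progressivity at the ``good'' $\gamma$ reduces cleanly to the extension hypothesis of $\DC_\rho$; the first and third items are then essentially bookkeeping.
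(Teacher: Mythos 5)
Your proposal is correct and follows essentially the route the paper intends: the paper's own ``proof'' is just a pointer to Berger's Propositions 3.3--3.4, and your three arguments --- the contrapositive translation between $\BI_\rho$ and $\DC_\rho$, the bit-flag coding on $\rho^\ast\times\bool$ for $\OI\to\DC$ (which is exactly what the stated relation $\rhd$ is designed for), and the minimal-bad-sequence construction using $\TI_\rho[\rhd]$ to select $\rhd$-minimal extendible entries for $\DC\to\OI$ --- are precisely the standard ones behind those citations. I see no gaps.
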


\begin{proof}
That dependent choice proves bar induction is well-known. The remaining results follow from \cite[Propositions 3.3-3.4]{Berger(2004.0)}.
\end{proof}

\section{The termination principle $\TP$}
\label{sec-TP}

Let us now forget about terms and recursive path orderings, which can be encoded in $\PAomega$ in terms of objects and relations of type $\nat$, and replace these with some arbitrary type $\rho$ and relation $\succ$ on $\rho$, which we more generally consider to be a \emph{predicate} on $\rho\times\rho$. We will consider a further relation $\rhd$ on $X$, which intuitively plays the role of the subterm relation, although the only assumption we make here is that $\rhd$ is wellfounded on $\rho$, or in other words, we have access to transfinite induction $\TI[\rhd]$ over $\rhd$.
\begin{definition}
\label{def-WF}We say that $\alpha:\nat\to\rho$ is wellfounded if it satisfies the predicate $\sWF_{\rho}[\succ](\alpha)$ defined by
\begin{equation*}\sWF_{\rho}[\succ](\alpha):\equiv \exists n(\alpha_{n}\nsucc\alpha_{n+1}).\end{equation*}
The relation $\succ$ is wellfounded if $\forall\alpha\sWF_{\rho}[\succ](\alpha)$.
\end{definition}

\begin{definition}
\label{def-MIN}We say that an infinite sequence $\alpha$ is \emph{minimal} with respect to $\rhd$ if all sequences lexicographically less than $\alpha$ are wellfounded with respect to $\succ$. We write this formally via the predicate $\sMIN_{\rho}[\rhd,\succ]$ given by
\begin{equation*}\sMIN_{\rho}[\rhd,\succ](\alpha):\equiv \forall\beta(\alpha\rlex\beta\to \sWF_\rho[\succ](\beta)).\end{equation*}
\end{definition}
Our abstract termination principle is nothing more than a formalisation of the idea briefly discussed in Section \ref{sec-prelim-path}, namely the statement that if all minimal sequences are well-founded, then $\succ$ is well-founded.
\begin{definition}[Termination principle]Given a relation $\rhd$ on $\rho$, we define the schema $\TP_\rho[\rhd]$ as follows:
\begin{equation*}\TP_{\rho}[\rhd]:\equiv \forall\alpha(\sMIN_{\rho}[\rhd,\succ](\alpha)\to\sWF_{\rho}[\succ](\alpha))\to \forall\alpha\sWF_{\rho}[\succ](\alpha).\end{equation*}
where $\succ$ ranges over arbitrary formulas in the language of $\PAomega$.\end{definition}
In the remainder of this paper we will drop the subscripts and/or parameters on $\TP$, $\sMIN(\alpha)$ and $\sWF(\alpha)$ whenever there is no risk of ambiguity.

\subsection{$\TP$ as a choice principle}
\label{sec-TP-choice}

We now show that $\TP$ has the same strength as the axiom of dependent choice and its variants, and is, moreover, instance-wise equivalent to each of the three choice principles outlined in Section \ref{sec-prelim-PA}. This will follow from Theorem \ref{thm-dcs}, together with the fact that $\TP_{\rho}[\rhd]$ is instance-wise equivalent to $\OI_\rho[\rhd]$, which we now prove.
\begin{lemma}
\label{lem-OI-TP}
$\OI_\rho[\rhd]\to \TP_\rho[\rhd]$ instance-wise over $\PAomega$.
\end{lemma}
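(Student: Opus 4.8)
The plan is to derive $\TP_\rho[\rhd]$ from $\OI_\rho[\rhd]$ by showing that the hypothesis of $\TP$ is exactly what is needed to discharge the inductive step of open induction with the open predicate $U(\alpha) :\equiv \sWF_\rho[\succ](\alpha)$. Note first that $\sWF_\rho[\succ](\alpha) \equiv \exists n(\alpha_n \nsucc \alpha_{n+1})$ is of the form $\exists n\, B(\initSeg{\alpha}{n+2})$, hence genuinely an open predicate in the sense required by $\OI_\rho[\rhd]$, so the instance of open induction we invoke is legitimate.

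Now assume the premise of $\TP_\rho[\rhd]$, namely $\forall\alpha(\sMIN_\rho[\rhd,\succ](\alpha) \to \sWF_\rho[\succ](\alpha))$. I want to conclude $\forall\alpha\,\sWF_\rho[\succ](\alpha)$, and by $\OI_\rho[\rhd]$ applied to $U = \sWF$ it suffices to prove the induction step
\begin{equation*}
\forall\alpha\bigl(\forall\beta(\alpha\rlex\beta \to \sWF_\rho[\succ](\beta)) \to \sWF_\rho[\succ](\alpha)\bigr).
\end{equation*}
So fix $\alpha$ and assume $\forall\beta(\alpha\rlex\beta\to\sWF_\rho[\succ](\beta))$. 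But this assumption is, unfolding Definition \ref{def-MIN}, precisely $\sMIN_\rho[\rhd,\succ](\alpha)$. Hence by the premise of $\TP$ instantiated at this $\alpha$ we obtain $\sWF_\rho[\succ](\alpha)$, which closes the induction step. Applying $\OI_\rho[\rhd]$ then yields $\forall\alpha\,\sWF_\rho[\succ](\alpha)$, as desired.

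The argument is essentially a definitional unfolding — the content is entirely in recognizing that $\sMIN_\rho[\rhd,\succ](\alpha)$ and the open-induction hypothesis $\forall\beta(\alpha\rlex\beta\to U(\beta))$ coincide when $U = \sWF$, and that $\sWF$ has the syntactic shape of an open predicate. The only point requiring any care, and the one I would state explicitly, is that this is an instance-wise reduction: a given instance of $\TP_\rho[\rhd]$ for a particular formula $\succ$ is proved from the single instance of $\OI_\rho[\rhd]$ with $U(\alpha) :\equiv \sWF_\rho[\succ](\alpha)$ and the same $\rhd$, so no extra choice or induction strength is smuggled in. I do not anticipate a genuine obstacle here; the lemma is the easy half of the equivalence, with the converse direction (and the passage through Theorem \ref{thm-dcs}) presumably carrying the real weight.
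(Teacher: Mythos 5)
Your proof is correct and is essentially the paper's own argument: the paper likewise observes that $\sWF[\succ](\alpha)$ is the open predicate $\exists n\,B(\initSeg{\alpha}{n})$ with $B(s):\equiv s_{|s|-2}\nsucc s_{|s|-1}$, and that $\TP[\rhd]$ is then literally $\OI[\rhd]$ applied to this predicate, since $\sMIN$ unfolds to the open-induction hypothesis. Your additional remarks on the instance-wise nature of the reduction are accurate but add nothing beyond the paper's one-line proof.
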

\begin{proof}
This is clear, since $\sWF[\succ](\alpha)$ can be expressed as the open predicate 
\begin{equation*}
\exists n B(\initSeg{\alpha}{n})\mbox{ \ \ \ for \ \ \ }B(s):\equiv s_{|s|-2}\nsucc s_{|s|-1}
\end{equation*}
and $\TP[\rhd]$ is nothing more than $\OI[\rhd]$ on the predicate $\sWF[\succ]$.
\end{proof}
\begin{corollary}
\label{cor-CDBI-TP}
$\DC_\rho\to \TP_\rho[\rhd]$ and $\BI_\rho\to \TP_\rho[\rhd]$ instance-wise over $\PAomega+\TI_\rho[\rhd]$.
\end{corollary}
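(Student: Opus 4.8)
The plan is to deduce the two implications of Corollary~\ref{cor-CDBI-TP} directly from Lemma~\ref{lem-OI-TP} by chaining it with the relevant parts of Theorem~\ref{thm-dcs}. Concretely, Theorem~\ref{thm-dcs} tells us that over $\PAomega + \TI_\rho[\rhd]$ we have $\DC_\rho \to \OI_\rho[\rhd]$, and Lemma~\ref{lem-OI-TP} gives $\OI_\rho[\rhd] \to \TP_\rho[\rhd]$ over $\PAomega$ alone; composing these yields $\DC_\rho \to \TP_\rho[\rhd]$ over $\PAomega + \TI_\rho[\rhd]$. For the bar induction version, I would additionally invoke the first bullet of Theorem~\ref{thm-dcs}, namely $\BI_\rho \leftrightarrow \DC_\rho$ over $\PAomega$, so that $\BI_\rho \to \DC_\rho \to \TP_\rho[\rhd]$ again over $\PAomega + \TI_\rho[\rhd]$.

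The one point that needs a little care is the \emph{instance-wise} bookkeeping, since the corollary is asserted to hold instance-wise and not merely at the level of provability over the full theories. Here I would note that Lemma~\ref{lem-OI-TP} is already instance-wise: a given instance of $\TP_\rho[\rhd]$ (determined by the choice of the formula $\succ$) is obtained from the single instance of $\OI_\rho[\rhd]$ on the fixed open predicate $\sWF[\succ]$, with $B(s) :\equiv s_{|s|-2} \nsucc s_{|s|-1}$. Likewise, the implication $\DC_\rho \to \OI_\rho[\rhd]$ from \cite[Propositions 3.3--3.4]{Berger(2004.0)} is instance-wise, so each instance of $\OI_\rho[\rhd]$ follows from finitely many instances of $\DC_\rho$; and the equivalence $\BI_\rho \leftrightarrow \DC_\rho$ is similarly instance-wise. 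Tracing the dependencies through, a fixed instance of $\TP_\rho[\rhd]$ is proved from a fixed finite collection of instances of $\DC_\rho$ (resp.\ $\BI_\rho$) together with $\TI_\rho[\rhd]$, which is exactly the instance-wise claim.

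I do not expect any genuine obstacle here, since the corollary is a straightforward composition of results already in place; the only thing to be careful about is that $\TI_\rho[\rhd]$ is genuinely needed — it enters precisely through the use of $\DC_\rho \to \OI_\rho[\rhd]$ in Theorem~\ref{thm-dcs}, whereas the step $\OI_\rho[\rhd] \to \TP_\rho[\rhd]$ needs nothing beyond $\PAomega$. So the proof is essentially a one-line diagram chase, and I would present it as such.

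\begin{proof}
Immediate from Lemma~\ref{lem-OI-TP} together with Theorem~\ref{thm-dcs}: over $\PAomega+\TI_\rho[\rhd]$ we have $\DC_\rho\to\OI_\rho[\rhd]\to\TP_\rho[\rhd]$, and since $\BI_\rho\leftrightarrow\DC_\rho$ over $\PAomega$ the same holds with $\BI_\rho$ in place of $\DC_\rho$. All of the implications involved are instance-wise, so the composite is too.
\end{proof}
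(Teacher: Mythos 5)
Your proof is correct and is exactly the paper's own argument: the paper also derives the corollary directly from Theorem~\ref{thm-dcs} combined with Lemma~\ref{lem-OI-TP}, routing $\BI_\rho$ through the equivalence with $\DC_\rho$. Your extra remarks on the instance-wise bookkeeping are accurate but not something the paper spells out.
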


\begin{proof}
Direct from Theorem \ref{thm-dcs} together with the above lemma. 
\end{proof}

Note that the standard proofs of open induction from either dependent choice (via the minimal bad-sequence argument) or bar induction, both lift easily to the termination principle. Proofs of this kind can be found in e.g. \cite{Berger(2004.0),Coquand(1991.0)}, and we give these explicitly in Appendix \ref{sec-app} because they also model important patterns often encountered in the term-rewriting literature, as we will indicate later.

\begin{lemma}
\label{lem-TP-OI}
$\TP_\rho[\rhd^\ast]\to \OI_\rho[\rhd]$ for suitable $\rhd^\ast$, instance-wise over $\PAomega$.
\end{lemma}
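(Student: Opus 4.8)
The goal is to recover open induction $\OI_\rho[\rhd]$ from the termination principle applied to a cleverly chosen relation $\rhd^\ast$ and a cleverly chosen $\succ$. The key observation is that $\OI_\rho[\rhd]$ talks about a single open predicate $U(\alpha) \equiv \exists n\, B(\initSeg{\alpha}{n})$, whereas $\TP_\rho[\rhd^\ast]$ talks about the specific open predicate $\sWF[\succ]$. So the plan is: given an arbitrary open predicate $U$ witnessed by $B$, design $\succ$ on $\rho$ so that $\sWF[\succ](\alpha)$ becomes equivalent (or closely related) to $U(\alpha)$, and design $\rhd^\ast$ so that $\sMIN_\rho[\rhd^\ast,\succ]$ matches the induction hypothesis of $\OI$, i.e.\ so that $\sMIN_\rho[\rhd^\ast,\succ](\alpha)$ is equivalent to $\forall\beta(\alpha\rlex\beta\to U(\beta))$ where the $\rlex$ on the right uses the \emph{original} $\rhd$.

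\medskip
\noindent\textbf{Approach.} By the proof of Lemma~\ref{lem-OI-TP}, the natural choice is $\succ$ defined on $\rho$ by $s \succ t :\equiv \neg B'(\langle s,t\rangle)$ — wait, more carefully: in Lemma~\ref{lem-OI-TP} the predicate $\sWF[\succ](\alpha) \equiv \exists n\, (\alpha_n \nsucc \alpha_{n+1})$ is matched with $\exists n\, B(\initSeg{\alpha}{n})$ by taking $B(s) :\equiv s_{|s|-2} \nsucc s_{|s|-1}$. I would run this backwards: given $B$ on $\rho^\ast$ defining the open predicate $U$, I cannot in general reduce a length-$n$ predicate $B(\initSeg{\alpha}{n})$ to a two-point relation on $\rho$. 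The standard trick is to pass to a different type — replace $\rho$ by $\rho^\ast$ (or $\rho^\ast \times \mathbb{N}$) and encode the history into the points themselves, so that an element of the new sequence carries enough of $\initSeg{\alpha}{n}$ to evaluate $B$. This is exactly why the statement says "for suitable $\rhd^\ast$'' and refers to $\TP_\rho[\rhd^\ast]$ with the ambient type still written $\rho$: one instantiates $\TP$ at a type built from $\rho$, and the relation $\rhd^\ast$ is the lifted subterm-like relation on that compound type.

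\medskip
\noindent\textbf{Key steps.} First I would fix an instance of $\OI_\rho[\rhd]$, with open predicate $U(\alpha) \equiv \exists n\, B(\initSeg{\alpha}{n})$ and with the induction step hypothesis $\forall\alpha(\forall\beta(\alpha\rlex\beta\to U(\beta))\to U(\alpha))$ assumed. Second, I would define the compound type and the auxiliary relation $\succ$ on it so that, for the canonical embedding $\alpha \mapsto \hat\alpha$ of $\rho^\mathbb{N}$ into the sequences over the compound type, $\sWF[\succ](\hat\alpha) \leftrightarrow U(\alpha)$; this is a bookkeeping matter of storing initial segments and declaring $\succ$ to fail exactly when $B$ fires. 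Third, and this is the crux, I would define $\rhd^\ast$ on the compound type so that $\sMIN[\rhd^\ast,\succ](\hat\alpha)$ unwinds to $\forall\beta(\alpha\rlex\beta\to U(\beta))$ — i.e.\ the lexicographic-descent-and-wellfounded condition for $\succ,\rhd^\ast$ must reproduce the lexicographic descent for $\rhd$ together with the open predicate $U$. Finally, I would apply $\TP$ at the compound type: its hypothesis $\forall\alpha(\sMIN\to\sWF)$ is discharged using the assumed induction step of $\OI$, and its conclusion $\forall\alpha\,\sWF$ yields $\forall\alpha\, U(\alpha)$, possibly after checking that every target sequence $\alpha$ arises (up to the relevant predicate) from some sequence over the compound type.

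\medskip
\noindent\textbf{Main obstacle.} The delicate point is getting $\rhd^\ast$ right: it must simultaneously (a) be wellfounded, so that the side condition $\TI[\rhd^\ast]$ needed to even state/use $\TP$ holds — likely one builds $\rhd^\ast$ from $\rhd$ plus a wellfounded component like $\mathbb{N}$ under $<$, or uses a coordinatewise/subterm-style lifting whose wellfoundedness follows from that of $\rhd$; and (b) make the induced $\rlex$ on the compound type correspond to $\rlex$ on $\rho$ in such a way that minimality translates correctly. There is a real risk of a mismatch between "all $\rlex$-smaller sequences over the compound type are $\succ$-wellfounded'' and "all $\rlex$-smaller sequences over $\rho$ satisfy $U$'', because the compound type has more sequences than the image of the embedding; handling these extra sequences — showing they are automatically $\succ$-wellfounded, or restricting attention via $\rhd^\ast$ so they never interfere — is where the argument will need the most care, and is presumably the reason the lemma is stated separately rather than folded into Lemma~\ref{lem-OI-TP}.
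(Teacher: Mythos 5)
Your overall architecture is exactly the paper's: instantiate $\TP$ at the type $\rho^\ast$ of finite sequences, let $a\succ b$ hold precisely when $b$ is a one-element extension of $a$ none of whose prefixes satisfies $B$, and let $\rhd^\ast$ be a pointwise-lexicographic lifting of $\rhd$ to finite sequences. You have also correctly located the two danger points (compound-type sequences that are not chains of initial segments, and the transfer of minimality). However, the proposal stops exactly where the mathematical work begins, so as it stands there is a genuine gap rather than a proof.

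Concretely, two things are missing. First, the disposal of the ``extra'' sequences: with the $\succ$ above, any $\gamma:\nat\to\rho^\ast$ that is not a chain of one-step extensions satisfies $\sWF[\succ](\gamma)$ automatically, by the very definition of $\succ$, so one may assume w.l.o.g.\ that $|\gamma_{n+1}|=|\gamma_n|+1$ and $\gamma_n\blacktriangleleft\gamma_{n+1}$; you guessed this resolution but did not commit to it or connect it to a concrete definition of $\succ$. Second, and more importantly, the step you explicitly defer as ``the crux'' --- that minimality of $\gamma$ with respect to $\rhd^\ast$ yields $\forall\beta(\tilde\gamma\rlex\beta\to U(\beta))$ --- requires an actual construction: one must write down $\rhd^\ast$ (the paper uses $a\rhd^\ast b:\equiv |b|\geq|a|\wedge(\exists i<|a|)(\initSeg{a}{i}=\initSeg{b}{i}\wedge a_i\rhd b_i)$), form the diagonal sequence $\tilde\gamma\in\rho^\nat$ from the chain $\gamma$, and, given $\beta$ with $\tilde\gamma\rlex\beta$, define $\delta_n:=\initSeg{\beta}{N+n}$ with $N=|\gamma_0|$ and verify by a case split on where the first $\rhd$-descent occurs that $\gamma\rlex^\ast\delta$, so that minimality of $\gamma$ turns ``$\delta$ is not an infinite $\succ$-chain'' into $U(\beta)$. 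Without this verification the claim that minimality ``unwinds to'' the $\OI$ hypothesis is an assertion, not an argument, and it is precisely the step on which the lemma lives or dies. (A minor point in your favour: your obstacle (a) is moot here --- wellfoundedness of $\rhd^\ast$ is not needed for this direction, since the lemma is a bare implication over $\PAomega$.)
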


\begin{proof}
To derive $\OI[\rhd]$ on the open predicate $U(\alpha):\equiv\exists n^\nat B(\initSeg{\alpha}{n})$, define $\succ$ and $\rhd^\ast$ on $\rho^\ast$ by
\begin{equation*}\begin{aligned}a\succ b&:\equiv (|b|=|a|+1)\wedge (a\blacktriangleleft b)\wedge (\forall c\blacktriangleleft b)\neg B(c)\\
a\rhd^\ast b&:\equiv(|b|\geq |a|)\wedge (\exists i<|a|)(\initSeg{a}{i}=\initSeg{b}{i}\wedge a_{i}\rhd b_i).\end{aligned}\end{equation*}
Assuming the premise of $\OI[\rhd]$ we prove the premise of $\TP[\rhd^\ast]$. Take some minimal $\gamma\in \nat\to \rho^\ast $, which satisfies
\begin{equation*}\label{eqn-TPpprem}(\ast) \ \ \ (\forall \delta)(\gamma\rlex^\ast\delta\to \exists n(\delta_n\nsucc \delta_{n+1})).\end{equation*}
We want to prove $\exists n(\gamma_n\nsucc\gamma_{n+1})$. We can assume w.l.o.g. that $\forall n(|\gamma_{n+1}|=|\gamma_n|+1\wedge\gamma_n\blacktriangleleft\gamma_{n+1})$, else if this were false then by definition there would be some $n$ with $\gamma_n\nsucc\gamma_{n+1}$. Let $N:=|\gamma_0|$ and define the diagonal sequence $\tilde\gamma\in \rho^\nat$ by
\begin{equation*}\tilde\gamma_n:=\begin{cases}(\gamma_0)_n & \mbox{if $n<N$}\\
(\gamma_{m+1})_{N+m} & \mbox{if $n=N+m$},\end{cases}\end{equation*}
which is well-defined since $|\gamma_{m}|=N+m$. Now suppose that $\beta\in \rho^\nat$ is such that $\tilde\gamma\rlex\beta,$ and define $\delta\in (\rho^\ast)^\nat$ by $\delta_n:=\initSeg{\beta}{N+n}$ (so in particular $|\gamma_n|=N+n=|\delta_n|$ for all $n$). Then we must have $\gamma\rlex^\ast\delta$. 

To see this, recall that $\tilde\gamma\rlex\beta$ means there exists some $m$ with  $\initSeg{\tilde\gamma}{m}=\initSeg{\beta}{m}$ and $\tilde\gamma_m\rhd\beta_m$. Then either $m<N$, in which case we have $\initSeg{\gamma_0}{m}=\initSeg{\tilde\gamma}{m}=\initSeg{\beta}{m}=\initSeg{\delta_0}{m}$ and $(\gamma_0)_m\rhd (\delta_0)_m$ and so $\gamma_0\rhd^\ast\delta_0$. Or $m=N+k$ then since $\initSeg{\tilde\gamma}{N+k}=\initSeg{\beta}{N+k}$ and $\tilde\gamma_{N+k}\rhd\beta_{N+k}$ it follows that $\gamma_n=\delta_n$ for all $n\leq k$, $\initSeg{\gamma_{k+1}}{N+k}=\initSeg{\delta_{k+1}}{N+k}$ and $(\gamma_{k+1})_{N+k}\rhd (\delta_{k+1})_{N+k}$ and hence $\gamma_{k+1}\rhd^\ast\delta_{k+1}$.

But if $\gamma\rlex^\ast\delta$ then by ($\ast$) there is some $n$ with $\delta_n\nsucc\delta_{n+1}$. But since $\delta_n=\initSeg{\beta}{N+n}\blacktriangleleft\initSeg{\beta}{N+n+1}=\delta_{n+1}$ this means that $(\exists c\blacktriangleleft\initSeg{\beta}{N+n+1}) B(c)$, or in other words, $B(\initSeg{\beta}{k})$ must hold for some $k\leq N+n+1$, from which $U(\beta)$ follows by definition. Therefore we have shown that $\forall \beta(\tilde\gamma\rlex\beta\to U(\beta))$, and hence by the premise of $\OI[\rhd]$ we obtain $U(\tilde\gamma)$.

But this means that there is some $n$ such that $B(\initSeg{\tilde\gamma}{n})$ holds, and since $\initSeg{\tilde\gamma}{n}\blacktriangleleft \gamma_{n\mathop{\dot -}N}\blacktriangleleft\gamma_{(n\mathop{\dot -}N)+1}$ (where $\mathop{\dot -}$ denotes cut-off subtraction) it follows that $(\exists c\blacktriangleleft \gamma_{(n\mathop{\dot -}N)+1})B(c)$, and therefore must have $\gamma_{n\mathop{\dot -}N}\nsucc \gamma_{(n\mathop{\dot -}N)+1}$. Therefore we have shown in all cases that $\exists n(\gamma_n\nsucc \gamma_{n+1})$ whenever $\gamma$ is minimal with respect to $\rhd^\ast$. This establishes the premise of $\TP_{\rhd^\ast,\succ}$, and so it follows that $(\exists n)(\gamma_n\nsucc\gamma_{n+1})$ holds for \emph{arbitrary} $\gamma$.

Now take some arbitrary $\alpha\in \rho^\nat$ and define $\gamma$ by $\gamma_n=\initSeg{\alpha}{n}$. Since there exists some $n$ with $\gamma_n\nsucc\gamma_{n+1}$ it follows that $(\exists c\blacktriangleleft\initSeg{\alpha}{n+1})B(c)$, in other words $B(\initSeg{\alpha}{k})$ holds for $k\leq n+1$, and thus $U(\alpha)$ holds. Therefore using $\TP[\rhd^\ast]$ we have proved $\OI_{\rhd}[U]$.
\end{proof}

\section{An equivalent formulation of $\TP$ for well-founded elements}
\label{sec-equivalent}

So far, our termination principle is essentially a modification of open induction, for open predicates which are restricted to two consecutive elements. In this section, we reformulate $\TP$ so that it more closely resembles a genuine termination argument, in the sense that it deals with well-founded \emph{elements} of $\rho$ rather than sequences. We will then apply this in the next section to provide an abstract termination principle for a generalisation of simplification orders.

\begin{definition}
\label{defn-eWF}
We say that $x:\rho$ is well-founded if it satisfies the predicate 
\begin{equation*}
\eWF_\rho[\succ](x):\equiv \forall\alpha (x\blacktriangleleft\alpha\to\sWF_\rho[\succ](\alpha))
\end{equation*}
\end{definition}

\begin{definition}
\label{defn-eMIN}
We define
\begin{equation*}
\eMIN_\rho[\rhd,\succ](\alpha):\equiv \forall n,y^\rho(\alpha_{n-1}\succ y\wedge\alpha_n\rhd y\to\eWF_\rho[\succ](y)))
\end{equation*} 
where for $n=0$ the condition $\alpha_{n-1}\succ y$ vanishes.
\end{definition}
\begin{definition}
The termination principle $\eTP_\rho[\rhd]$ is defined as
\begin{equation*}\eTP_\rho[\rhd]:\equiv \forall\alpha(\eMIN_\rho[\rhd,\succ](\alpha)\to \exists n(\alpha_n\nsucc \alpha_{n+1}))\to \forall x\eWF_\rho[\succ](x).\end{equation*}
where $\succ$ ranges over arbitrary formulas in the language of $\PAomega$.\end{definition}

\begin{lemma}\label{res-TP-eTP}$\TP_\rho[\rhd]\leftrightarrow\eTP_\rho[\rhd]$ instance-wise over $\PAomega$.\end{lemma}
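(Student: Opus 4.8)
The plan is to prove the two implications of $\TP_\rho[\rhd]\leftrightarrow\eTP_\rho[\rhd]$ separately, in each case showing that the premise of one principle implies the premise of the other so that the shared conclusion transfers. The conclusions differ superficially — $\TP$ concludes $\forall\alpha\sWF(\alpha)$ whereas $\eTP$ concludes $\forall x\eWF(x)$ — but these are in fact equivalent: $\forall\alpha\sWF(\alpha)$ trivially gives $\forall x\eWF(x)$ since $\eWF(x)$ just quantifies $\sWF$ over extensions of $x$, and conversely $\eWF_\rho[\succ](\seq{})$ unfolds to $\forall\alpha\sWF(\alpha)$. So the real content is matching up the two minimality premises.

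For $\TP_\rho[\rhd]\to\eTP_\rho[\rhd]$, I would assume the premise of $\eTP$, namely $\forall\alpha(\eMIN(\alpha)\to\exists n(\alpha_n\nsucc\alpha_{n+1}))$, and aim to establish the premise of $\TP$, i.e.\ $\forall\alpha(\sMIN(\alpha)\to\sWF(\alpha))$. So fix $\alpha$ with $\sMIN_\rho[\rhd,\succ](\alpha)$; I must show $\sWF(\alpha)$, which is $\exists n(\alpha_n\nsucc\alpha_{n+1})$. By the $\eTP$-premise it suffices to show $\eMIN_\rho[\rhd,\succ](\alpha)$, that is, for all $n$ and all $y$ with $\alpha_{n-1}\succ y$ and $\alpha_n\rhd y$, we have $\eWF(y)$. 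The key observation is that given such a $y$, the "hybrid" sequence $\initSeg{\alpha}{n}\ast y\ast\delta$ for any extension $\delta$ is lexicographically below $\alpha$ at position $n$ (since $\alpha_n\rhd y$ and the first $n$ entries agree), hence by $\sMIN(\alpha)$ it is wellfounded, i.e.\ has a descent somewhere. I then need to argue this descent cannot occur at or before position $n-1$ (there the sequence agrees with $\alpha$ and $\alpha_{n-1}\succ y$ by hypothesis, so no descent there either at the junction), which forces the descent to land inside $y\ast\delta$; unfolding, this yields $\sWF(y\ast\delta)$, i.e.\ $\eWF(y)$. Care is needed handling the $n=0$ case where the $\alpha_{n-1}\succ y$ condition is vacuous — then one uses the sequence $y\ast\delta$ directly, noting $y\ast\delta$ is lexicographically below $\alpha$ via $\alpha_0\rhd y$.

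For the converse $\eTP_\rho[\rhd]\to\TP_\rho[\rhd]$, I would assume the premise of $\TP$, namely $\forall\alpha(\sMIN(\alpha)\to\sWF(\alpha))$, and derive the premise of $\eTP$: $\forall\alpha(\eMIN(\alpha)\to\exists n(\alpha_n\nsucc\alpha_{n+1}))$. Fix $\alpha$ with $\eMIN_\rho[\rhd,\succ](\alpha)$; I want $\exists n(\alpha_n\nsucc\alpha_{n+1})$, which by the $\TP$-premise follows if I can show $\sMIN_\rho[\rhd,\succ](\alpha)$, i.e.\ every $\beta$ with $\alpha\rlex\beta$ satisfies $\sWF(\beta)$. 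Unwind $\alpha\rlex\beta$: there is an $n$ with $\initSeg{\alpha}{n}=\initSeg{\beta}{n}$ and $\alpha_n\rhd\beta_n$. Set $y:=\beta_n$. The plan is to show $\eWF(y)$ using $\eMIN(\alpha)$, then conclude $\sWF(\beta)$ from the fact that $\beta$ extends $\initSeg{\beta}{n+1} = \initSeg{\alpha}{n}\ast y$ and $\eWF(y)$ gives wellfoundedness of everything extending $y$ — but one must additionally rule out a descent within the first $n+1$ entries. There are two sub-cases: if $\alpha_{n-1}\nsucc\beta_n$ (equivalently $\alpha_{n-1}\nsucc y$), then since $\initSeg{\alpha}{n}=\initSeg{\beta}{n}$ we get $\beta_{n-1}\nsucc\beta_n$ directly, giving $\sWF(\beta)$ immediately. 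Otherwise $\alpha_{n-1}\succ y$; combined with $\alpha_n\rhd y$ and $\eMIN(\alpha)$ this yields $\eWF(y)$, and since $y=\beta_n$ appears in $\beta$ we get $\sWF(\beta)$. Again the $n=0$ case needs separate (easier) treatment.

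The main obstacle, and the point to be careful about, is the bookkeeping around where a descent in an auxiliary sequence can occur — specifically ensuring that a descent guaranteed "somewhere" in a spliced sequence $\initSeg{\alpha}{n}\ast y\ast\delta$ actually lands in the tail $y\ast\delta$ and not in the shared prefix, which is exactly where the $\alpha_{n-1}\succ y$ hypothesis in the definition of $\eMIN$ earns its keep (and why its vanishing at $n=0$ forces a case split). This is the one place the definitions of $\eMIN$ and $\sMIN$ are genuinely calibrated against each other rather than being formally identical, so the argument is not a pure syntactic rewrite; but once the indexing is set up correctly each step is routine.
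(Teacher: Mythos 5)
Your overall strategy --- reduce the lemma to the equivalence of the two premises, since the conclusions $\forall\alpha\,\sWF(\alpha)$ and $\forall x\,\eWF(x)$ are trivially interderivable --- is exactly the paper's, and your direction $\eTP\to\TP$ (deriving the $\TP$-premise from the $\eTP$-premise by showing $\eMIN(\alpha)\to\sMIN(\alpha)$, with the case split on whether $\alpha_{n-1}\succ\beta_n$) matches the paper's argument step for step and is fine.

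The gap is in the other direction, where you attempt to derive $\eMIN(\alpha)$ from $\sMIN(\alpha)$ outright. Given $n,y$ with $\alpha_{n-1}\succ y$ and $\alpha_n\rhd y$, you correctly note that $\sMIN(\alpha)$ forces a non-descent (a position $k$ with no $\succ$-step) somewhere in the spliced sequence $\initSeg{\alpha}{n}\ast y\ast\delta$, and that the junction $\alpha_{n-1}\to y$ is a genuine descent. But your reason for excluding positions $k\le n-2$ --- ``there the sequence agrees with $\alpha$'' --- proves nothing: $\alpha$ itself may satisfy $\alpha_k\nsucc\alpha_{k+1}$ for some $k\le n-2$, in which case the guaranteed non-descent lands in the shared prefix and tells you nothing about $y\ast\delta$. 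Indeed the pointwise implication $\sMIN(\alpha)\to\eMIN(\alpha)$ is false in general: take $\alpha$ with $\alpha_0\nsucc\alpha_1$ but $\alpha_1\succ y$ and $\alpha_2\rhd y$, where $y$ heads an infinite $\succ$-descending chain. The missing move is the paper's: suppose $\neg\sWF(\alpha)$ for contradiction (equivalently, first dispose of the case where $\sWF(\alpha)$ already holds, since that is the goal). Under $\neg\sWF(\alpha)$ one has $\alpha_0\succ\alpha_1\succ\cdots\succ\alpha_{n-1}\succ y$, so every prefix position genuinely descends and the non-descent is forced into the tail, yielding $\sWF(y\ast\delta)$ and hence $\eWF(y)$, whence $\eMIN(\alpha)$ and the contradiction. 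With that one extra step your argument goes through; as written, the claim that the non-descent ``cannot occur at or before position $n-1$'' is unjustified for $n\ge 2$.
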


\begin{proof}We clearly have $\forall\alpha\sWF(\alpha)\leftrightarrow \forall x\eWF(x)$ and so the result follows if we can show that the premise of $\TP$ is equivalent to that of $\eTP$. 

In one direction, assume that $\forall\alpha(\sMIN(\alpha)\to\sWF(\alpha))$ and $\eMIN(\alpha)$ holds for some fixed $\alpha$. Take some $\beta\llex\alpha$ with $\initSeg{\alpha}{n}=\initSeg{\beta}{n}\wedge\alpha_n\rhd\beta_n$. Either $\alpha_{n-1}=\beta_{n-1}\nsucc \beta_n$ and so $\sWF(\beta)$, or $\alpha_{n-1}\succ\beta_n$ and so by $\eMIN(\alpha)$ we have $\eWF(\beta_n)$ and hence $\sWF(\beta)$  (note that for $n=0$, $\sWF(\beta)$ follows directly from $\alpha_0\rhd\beta_0$, since in this case the requirement $\alpha_{n-1}\succ \beta_n$ vanishes). This establishes $\sMIN(\alpha)$ and therefore $\sWF(\alpha)$. 

For the other direction, assume that $\forall\alpha(\eMIN(\alpha)\to\sWF(\alpha))$ and $\sMIN(\alpha)$ holds for some fixed $\alpha$, and suppose for contradiction that $\neg \sWF(\alpha)$. Take some $n$ and $y$ such that $\alpha_{n-1}\succ y$ and $\alpha_n\rhd y$. Then in particular, for any $y\blacktriangleleft \beta$ we have $\initSeg{\alpha}{n}\ast \beta\llex \alpha$ and therefore $\sWF(\initSeg{\alpha}{n}\ast\beta)$ by $\sMIN(\alpha)$. But since $\alpha_0\succ\ldots\succ\alpha_{n-1}\succ y$, this means that $\sWF(\beta)$, and thus we have established $\eMIN(\alpha)$ and therefore $\sWF(\alpha)$, a contradiction.
\end{proof} 

\section{Simplification orders}
\label{sec-simplification}

We now present our final variation of $\TP$, which can be directly related to abstract termination principles as they appear in the term-rewriting literature. The key to this is to introduce an additional property in all finite types which generalises a feature possessed by the majority of well-known path orders in term rewriting: namely that $x\rhd y$ (or more generally $(\exists u)(x\rhd u\succeq y)$) implies $x\succ y$. Recall that in this setting, $\lhd$ plays the role of the subterm relation, and orders which have the aforementioned property are known as \emph{simplification} orders.

Simplification orders can be characterised by a auxiliary relation $\succ_0$ which essentially defines $x\succ y$ in the case that $(\exists u)(x\rhd u\succeq y)$ is not true. In the case of terms in \cite{FerZan(1995.0)}, this splitting up of $\succ$ is called a \emph{decomposition}, and so we use the same terminology here, although of course for us our basic objects are not terms but elements of some arbitrary type $\rho$. 
\begin{definition}\label{defn-decomp}A predicate $\succ_0$ on $\rho\times\rho$ is called a decomposition of $\succ$ if it satisfies the following two properties:
\begin{enumerate}[(a)]

\item\label{item-decomp-a} $x\succ y\to \exists u(x\rhd u\succeq y)\vee x\succ_0 y$;

\item\label{item-decomp-b} $x\succ_0y\to \forall u(y\rhd u\to x\succ u)$.

\end{enumerate}
where $\succeq$ denotes the predicate $x\succ y\vee x=y$. Note that if $x\succ y\to \forall u(y\rhd u\to x\succ u)$ then $\succ$ is a decomposition of itself, although naturally we are interested in cases where $\succ_0$ is a not the same as $\succ$.\end{definition}

\begin{example}
For the recursive path order discussed in Section \ref{sec-prelim-path}, we would define $t=f(t_1,\ldots,t_n)\succ_0 s$ iff
\begin{enumerate}[(i)]

\item $s=g(s_1,\ldots,s_m)$ for some $f\succ_F g$, and $t\rpo s_i$ for all $i=1,\ldots,m$,

\item $s=f(s_1,\ldots,s_n)$, $t\rpo s_i$ for all $i=1,\ldots,n$, and $(t_1,\ldots,t_n)\succ_{f}(s_1,\ldots,s_n)$.

\end{enumerate}
Then $\succ_0$ is clearly a decomposition of $\rpo$ with respect to the immediate subterm relation $\rhd$.
\end{example}

The notion of a decomposition is extremely useful, as it enables us to restrict our attention to wellfoundedness of minimal sequences under the auxiliary relation $\succ_0$, which in practise is usually chosen to be something obviously wellfounded. 

\begin{definition}
Define the predicate $A_\rho[\rhd,\succ](x)$ on $\rho$ by 
\begin{equation*}
A_\rho[\rhd,\succ](x):\equiv (\forall y\lhd x)\eWF[\succ](y)
\end{equation*} 
and define
\begin{equation*}\eWF_{A}[\rhd,\succ,\succ_0](x):\equiv (\forall\alpha\in A^\NN)(x\blacktriangleleft \alpha\to \exists n(\alpha_n\nsucc_0\alpha_{n+1})).\end{equation*}
where $\alpha\in A$ is shorthand for $\forall n A(\alpha_n)$.
\end{definition}

\begin{definition}
The termination principle $\sTP_\rho[\rhd]$ is defined as
\begin{equation*}
\sTP	_\rho[\rhd]:\equiv \forall x\eWF_{A}[\rhd,\succ,\succ_0](x)\to \forall x\eWF[\succ](x)
\end{equation*}
where $\succ$ and $\succ_0$ range over arbitrary formulas in the language of $\PAomega$.
\end{definition}

\begin{theorem}
\label{res-TP-simp}
If $\succ_0$ is a decomposition of $\succ$, then $\eTP[\rhd]\to\sTP[\rhd]$ instance-wise over $\PAomega$. If, in addition, $x\succ_0y\to x\succ y$ then the implication holds in the other direction.\end{theorem}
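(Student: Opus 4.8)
The plan is to prove each direction of Theorem \ref{res-TP-simp} by comparing the two termination principles on the level of their premises, since the conclusions $\forall x\,\eWF[\succ](x)$ of $\eTP$ and $\sTP$ are literally the same. So in both directions I would fix the relations involved and try to derive the premise of the target principle from the premise of the source.

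For the direction $\eTP[\rhd]\to\sTP[\rhd]$, assume $\succ_0$ is a decomposition of $\succ$ and that the premise of $\sTP$ holds, i.e. $\forall x\,\eWF_A[\rhd,\succ,\succ_0](x)$. I want the premise of $\eTP$: that every $\alpha$ which is $\eMIN$ satisfies $\exists n(\alpha_n\nsucc\alpha_{n+1})$. So fix such an $\alpha$ with $\eMIN_\rho[\rhd,\succ](\alpha)$ and suppose for contradiction that $\alpha_n\succ\alpha_{n+1}$ for all $n$. The key observation is that $\eMIN(\alpha)$ exactly says that whenever $\alpha_{n-1}\succ y$ and $\alpha_n\rhd y$ then $\eWF(y)$; combined with $\alpha_n\succ\alpha_{n+1}$ this should let me show $A(\alpha_{n+1})$ for all $n$ — i.e. every proper subterm $y\lhd\alpha_{n+1}$ is $\eWF$. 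Indeed if $y\lhd\alpha_{n+1}$ then with $\alpha_n\succ\alpha_{n+1}$ (for $n\geq 0$) and $\alpha_{n+1}\rhd y$, $\eMIN$ gives $\eWF(y)$; the first element $\alpha_0$ needs a small separate argument, which I can arrange by passing to the tail $\alpha' := \langle\alpha_1,\alpha_2,\ldots\rangle$, noting it is still $\eMIN$ and still strictly $\succ$-descending, and that $\alpha'\in A^{\NN}$ by the above. Now $\alpha_1\blacktriangleleft\alpha'$ and $\alpha'\in A^{\NN}$, so by $\eWF_A[\rhd,\succ,\succ_0](\alpha_1)$ there is some $n$ with $\alpha'_n\nsucc_0\alpha'_{n+1}$. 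But $\alpha'_n\succ\alpha'_{n+1}$, so by property (\ref{item-decomp-a}) of a decomposition, $\exists u(\alpha'_n\rhd u\succeq\alpha'_{n+1})$. This $u$ is then a subterm of $\alpha'_n$, hence $\eWF(u)$ by what we showed (here is where I need $A(\alpha'_n)$, i.e. that we are looking at an index $\geq 1$ in the original $\alpha$, which is why the shift to $\alpha'$ is convenient), and since $u\succeq\alpha'_{n+1}$ and $\eWF$ is downward closed under $\succ$ in the obvious sense one concludes $\eWF(\alpha'_{n+1})$, contradicting that $\alpha'$ is an infinite $\succ$-descending chain starting below $\alpha'_{n+1}$. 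Hence no such infinite descending chain exists, giving the premise of $\eTP$, and then $\eTP$ yields $\forall x\,\eWF[\succ](x)$.

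For the converse, assume additionally $x\succ_0 y\to x\succ y$ and assume the premise of $\eTP$: every $\eMIN$ sequence $\alpha$ has $\alpha_n\nsucc\alpha_{n+1}$ for some $n$. I want the premise of $\sTP$, namely $\forall x\,\eWF_A[\rhd,\succ,\succ_0](x)$. Fix $x$ and fix $\alpha\in A^{\NN}$ with $x\blacktriangleleft\alpha$; I must produce $n$ with $\alpha_n\nsucc_0\alpha_{n+1}$. I would like to apply the premise of $\eTP$ to $\alpha$ itself, so I need to check $\eMIN_\rho[\rhd,\succ](\alpha)$: given $n$ and $y$ with $\alpha_{n-1}\succ y$ and $\alpha_n\rhd y$, I need $\eWF(y)$ — but $\alpha_n\rhd y$ means $y\lhd\alpha_n$ and $\alpha\in A^{\NN}$ gives $A(\alpha_n)$, i.e. every $y\lhd\alpha_n$ is $\eWF$; so $\eMIN(\alpha)$ holds. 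Then by the premise of $\eTP$ there is $n$ with $\alpha_n\nsucc\alpha_{n+1}$, and since $x\succ_0 y\to x\succ y$ we get $\alpha_n\nsucc_0\alpha_{n+1}$, which is exactly what was needed. Hence $\eWF_A(x)$ holds for all $x$, establishing the premise of $\sTP$, and $\sTP$ then gives the conclusion.

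The main obstacle is the bookkeeping in the first (harder) direction around the "first element" of the sequence: $\eMIN(\alpha)$ only controls $y$ with $\alpha_{n-1}\succ y$ and $\alpha_n\rhd y$, so establishing $A(\alpha_0)$ directly is not possible, and one must be careful to extract the contradiction at an index $\geq 1$ — hence the shift to the tail $\alpha'$ — and then correctly chain the decomposition property (\ref{item-decomp-a}), the downward closure of $\eWF$ under $\succ$, and the assumption that $\alpha$ is $\succ$-descending to reach the contradiction. I should also double-check that the "downward closure of $\eWF$ under $\succ$" that I am using, namely $u\succeq v\wedge\eWF(u)\to\eWF(v)$, is genuinely available: it follows immediately from the definition of $\eWF$ since any $v\blacktriangleleft\beta$ gives $u\ast\beta$ (or $\langle u\rangle\ast\beta$) witnessing an infinite sequence through $u$, so $\sWF(u\ast\beta)$ and hence $\sWF(\beta)$ — and this small lemma, though routine, is the hinge of the argument.
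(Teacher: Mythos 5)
Your second direction is correct and is essentially the paper's argument: check $\eMIN(\alpha)$ for $\alpha\in A^\NN$ directly from $A(\alpha_n)$, apply the premise of $\eTP$, and pass from $\nsucc$ to $\nsucc_0$ by the extra hypothesis. Your closing lemma that $u\succeq v\wedge\eWF(u)\to\eWF(v)$ is also fine and is needed.

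The first direction, however, has a genuine gap at the step ``if $y\lhd\alpha_{n+1}$ then with $\alpha_n\succ\alpha_{n+1}$ and $\alpha_{n+1}\rhd y$, $\eMIN$ gives $\eWF(y)$.'' Look again at Definition \ref{defn-eMIN}: the hypothesis of $\eMIN(\alpha)$ at index $n+1$ is $\alpha_n\succ y\wedge\alpha_{n+1}\rhd y$, where $y$ itself must be $\succ$-below $\alpha_n$. You only have $\alpha_n\succ\alpha_{n+1}\rhd y$, and nothing in the setting (no transitivity, no subterm property for $\succ$) lets you conclude $\alpha_n\succ y$. Consequently $A(\alpha_{n+1})$, and hence $\alpha'\in A^\NN$, is not established, and the rest of the argument collapses. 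This is exactly the gap the decomposition is designed to bridge, and it is telling that your proof never invokes property (\ref{item-decomp-b}). The paper's proof proceeds in two stages: first it shows that a bad $\eMIN$ sequence must satisfy $\alpha_n\succ_0\alpha_{n+1}$ for \emph{all} $n$ (taking a minimal $n$ with $\alpha_n\nsucc_0\alpha_{n+1}$, using (\ref{item-decomp-a}) to produce $u$ with $\alpha_n\rhd u\succeq\alpha_{n+1}$, then using (\ref{item-decomp-b}) applied to $\alpha_{n-1}\succ_0\alpha_n$ to get $\alpha_{n-1}\succ u$ so that $\eMIN$ legitimately applies to $u$, yielding $\eWF(\alpha_{n+1})$ and a contradiction); only then, with $\alpha_{n}\succ_0\alpha_{n+1}$ in hand, does property (\ref{item-decomp-b}) convert $\alpha_{n+1}\rhd y$ into $\alpha_n\succ y$, which is what makes $\eMIN(\alpha)$ yield $\alpha\in A^\NN$ and the contradiction with $\eWF_A(\alpha_0)$. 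Incidentally, your worry about the first element is unfounded: the side condition $\alpha_{n-1}\succ y$ vanishes at $n=0$ by definition, so $A(\alpha_0)$ follows immediately from $\eMIN(\alpha)$ and no shift to the tail is needed.
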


\begin{proof}For one direction suppose that $\eTP$ and $\forall x\eWF_{A}(x)$ hold. We fix some $\alpha$ and prove $\eMIN(\alpha)\to\sWF(\alpha)$. Suppose for contradiction that $\neg\sWF(\alpha)\wedge\eMIN(\alpha)$ is true. Our first step is to show that $\neg\sWF_{\succ_0}(\alpha)$. Suppose for contradiction that $\alpha_n\nsucc_0\alpha_{n+1}$ for some $n$, and w.l.o.g. take this $n$ to be minimal. Then since we must have $\alpha_n\succ\alpha_{n+1}$ (by $\neg\sWF(\alpha)$), by property (\ref{item-decomp-a}) it can only be that $\alpha_n\rhd u\succeq\alpha_{n+1}$ for some $u$. But by minimality of $n$ we have $\alpha_{n-1}\succ_0\alpha_n$ and hence by property (\ref{item-decomp-b}) of Definition \ref{defn-decomp} we have $\alpha_{n-1}\succ u$. But since both $\alpha_{n-1}\succ u$ and $\alpha_n\rhd u$ it follows from $\eMIN(\alpha)$ that $\eWF(u)$ (note that for $m=0$ the prerequisite $\alpha_{n-1}\succ u$ is redundant), and since $u\succeq\alpha_{n+1}$ this implies that $\eWF(\alpha_{n+1})$ and hence $\sWF(\alpha)$, contradicting $\neg\sWF(\alpha)$.

So we have $\neg\sWF_{\succ_0}(\alpha)$. Now, it follows from $\eMIN(\alpha)$ that for any $n,y$ we have $\alpha_{n-1}\succ y\wedge\alpha_n\rhd y\to\eWF(y)$. But since by $\neg\sWF_{\succ_0}(\alpha)$ we must have $\alpha_{n-1}\succ_0\alpha_n$, and therefore by property (\ref{item-decomp-b}), $\alpha_n\rhd y$ automatically implies $\alpha_{n-1}\succ y$, and so in summary we have shown $(\forall n,y)(\alpha_n\rhd y\to\eWF(y))$, or in other words $\alpha\in A^\NN$. But then $\neg\sWF_{\succ_0}(\alpha)$ contradicts $\eWF_{A}(\alpha_0)$ and thus also our assumption that $\forall x\eWF_{A,\succ_0}(x)$. So $\neg\sWF(\alpha)\wedge\eMIN(\alpha)$ must be false, and since $\alpha$ was arbitrary we have proven the premise of $\eTP$, from which we can infer $\forall x\eWF(x)$.

For the other direction, given our additional assumption $x\succ_0 y\to x\succ y$, suppose that $\forall x\eWF_{A,>_0}(x)\to \forall x\eWF(x)$ and the premise of $\eTP$ hold. Let's take some $x\blacktriangleleft\alpha$ with $\alpha\in A^\NN$. Then it is clear that such an $\alpha$ must satisfy $\sMIN(\alpha)$: Given $n$ and $y$ with $\alpha_{n-1}\succ y$ and $\alpha_m\rhd y$, then by $\alpha_n\in A$ we clearly have $\eWF(y)$. Therefore by the premise of $\eTP$ we have $\sWF(\alpha)$ i.e. $\alpha_n\nsucc\alpha_{n+1}$ for some $n$, and by our additional assumption this implies $\alpha_n\nsucc_0\alpha_{n+1}$ and hence $\sWF_{\succ_0}(\alpha)$. Since $x$ and $\alpha$ were arbitrary we have proved $\forall x\eWF_{A,>_0}(x)$ from which we can infer $\forall x\eWF(x)$, and this establishes $\eTP$.\end{proof}

\section{A connection with abstract path orderings}
\label{sec-connection}

Our final termination principle $\sTP$ follows instance-wise from dependent choice and conversely, modulo a small additional assumption, the two are actually equivalent. We now show how $\sTP$ can very much be viewed as a genuine termination principle by showing that it is closely related to the abstract termination theorem of Goubault-Larrecq in \cite[Theorem 1]{GLar(2001.0)}, when the latter is formulated in a typed setting.
\begin{corollary}[Goubault-Larrecq \cite{GLar(2001.0)}]Let $\succ$, $\rhd$ and $\gg$ be three binary relations on $\rho$ such that $x\succ y$ implies that either
\begin{enumerate}[(i)]

\item\label{item-GLi} $x\rhd u\succeq y$ for some $u$, or

\item\label{item-GLii} $x\gg y$ and $(\forall u)(y\rhd u\to x\succ u)$.

\suspend{enumerate}

and furthermore

\resume{enumerate}[{[(i)]}]

\item\label{item-GLiii} $\rhd$ is well-founded, and

\item\label{item-GLiv} for every $x\in X$, if for every $y\lhd x$ we have $\eWF(y)$ then $x$ is accessible in $\gg_{A}$ where $A:\equiv\{x\in X\; | \; (\forall y\lhd x)\eWF(y)\}$.

\end{enumerate}
Then $\forall x\eWF(x)$. \end{corollary}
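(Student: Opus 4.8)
The plan is to obtain this as a direct application of $\sTP_\rho[\rhd]$. The key observation is that hypotheses (\ref{item-GLi}) and (\ref{item-GLii}) are exactly the statement that the relation $\succ_0$ defined by
\begin{equation*}
x \succ_0 y \ :\equiv \ x \gg y \ \wedge \ (\forall u)(y \rhd u \to x \succ u)
\end{equation*}
is a decomposition of $\succ$ in the sense of Definition \ref{defn-decomp}: property (\ref{item-decomp-a}) is precisely the disjunction of (\ref{item-GLi}) and (\ref{item-GLii}), and property (\ref{item-decomp-b}) is immediate from the definition of $\succ_0$. Moreover, by hypothesis (\ref{item-GLiii}) the relation $\rhd$ is well-founded, so $\TI_\rho[\rhd]$ is available; hence, chaining Corollary \ref{cor-CDBI-TP} ($\DC_\rho\to\TP_\rho[\rhd]$), Lemma \ref{res-TP-eTP} ($\TP_\rho[\rhd]\leftrightarrow\eTP_\rho[\rhd]$) and Theorem \ref{res-TP-simp} ($\eTP_\rho[\rhd]\to\sTP_\rho[\rhd]$, applied with this choice of $\succ_0$), the principle $\sTP_\rho[\rhd]$ holds over $\PAomega+\DC$.

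It then suffices to verify the premise $\forall x\,\eWF_{A}[\rhd,\succ,\succ_0](x)$ of $\sTP_\rho[\rhd]$, where $A(x)\equiv(\forall y\lhd x)\eWF(y)$, which is exactly the set $A$ figuring in hypothesis (\ref{item-GLiv}). So I would fix $x$ together with some $\alpha\in A^\NN$ satisfying $x\blacktriangleleft\alpha$, and suppose for contradiction that $\alpha_n\succ_0\alpha_{n+1}$ for every $n$. In particular $\alpha_n\gg\alpha_{n+1}$ for every $n$, and since each $\alpha_n$ lies in $A$, the sequence $(\alpha_n)_n$ is an infinite $\gg_A$-descending chain starting at $x=\alpha_0$. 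But $\alpha_0\in A$ means $(\forall y\lhd\alpha_0)\eWF(y)$, so hypothesis (\ref{item-GLiv}) tells us that $\alpha_0$ is accessible in $\gg_A$, which is incompatible with the existence of such a chain. Hence $\eWF_{A}(x)$ holds for every $x$, and $\sTP_\rho[\rhd]$ then delivers $\forall x\,\eWF(x)$, as required.

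The only step that needs any care is this final contradiction, namely the passage from ``$\alpha_0$ is accessible in $\gg_A$'' to ``there is no infinite $\gg_A$-descending chain from $\alpha_0$'': this is the standard characterisation of accessibility and is obtained by a routine transfinite induction over $\gg_A$, which is unproblematic in the present setting. Everything else is bookkeeping: identifying the decomposition $\succ_0$, observing that only its $\gg$-component is used in the argument above, and noting that the predicate $A$ of $\sTP$ coincides verbatim with the set $A$ of (\ref{item-GLiv}). I therefore expect no genuine obstacle — the mathematical content of the corollary has already been packaged into Theorem \ref{res-TP-simp} together with the earlier reduction of $\sTP$ to dependent choice.
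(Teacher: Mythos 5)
Your proposal is correct and follows essentially the same route as the paper: the same decomposition $\succ_0$, the same identification of hypothesis (\ref{item-GLiv}) with $\forall x\,\eWF_{A}[\gg](x)$ (which you unfold via the descending-chain argument), and the same appeal to Theorem \ref{res-TP-simp} using $x\succ_0 y\to x\gg y$. The only difference is that you spell out explicitly the passage from accessibility to the nonexistence of infinite $\gg_A$-chains, which the paper leaves implicit.
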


\begin{remark}Note that technically, condition (\ref{item-GLiv}) above is actually the alternative condition (v) in \cite{GLar(2001.0)}.\end{remark}

\begin{proof}The first assumption that $x\succ y$ implies either (\ref{item-GLi}) or (\ref{item-GLii}) shows that the binary predicate $\succ_0$ given by 
\begin{equation*}x\succ_0y:\equiv x\gg y\wedge (\forall u)(y\rhd u\to x\succ u)\end{equation*}
is a decomposition of $\succ$. The wellfoundedness condition (\ref{item-GLiii}) corresponds to $\TI[\rhd]$, while (\ref{item-GLiv}) is equivalent to the statement $\forall x\eWF_{A}[\gg](x)$. But since $x\succ_0y\to x\gg y$ this in turn implies $\forall x\eWF_{A}[\succ_0](x)$, and therefore by Theorem \ref{res-TP-simp} we have $\forall x\eWF(x)$.\end{proof}

The original proof in \cite{GLar(2001.0)} uses a variant of bar induction. If we were to take the bar inductive proof of $\TP$ (given explicitly in Appendix \ref{sec-app}) and adapt via the proofs of Theorems \ref{res-TP-eTP} and \ref{res-TP-simp} to a bar inductive proof of $\sTP$, it would be closely related to that of \cite{GLar(2001.0)}. 

In addition, \cite{GLar(2001.0)} shows that wellfoundedness of many of the usual path orders, including Fereirra-Zantema's wellfoundedness proof for term orderings \cite[Theorem 4]{FerZan(1995.0)}, follow as a corollary of the above result, and so in turn must also be subsumed by our abstract termination principle. Moreover, were we to adapt the proof of $\TP$ via dependent choice (also given in Appendix \ref{sec-app}) to prove Theorem 4 of \cite{FerZan(1995.0)}, we would end up with a very similar proof based on a minimal bad-sequence construction.

It is interesting to note that although proofs of termination via open induction are much less common, they have been considered from the perspective of formalisation \cite{Sternagel(2017.0)}, where direct inductive argument is much easier to work with than proof which is reliant on classical logic.

All of this demonstrates that $\TP$ and its variants are not only abstract termination theorems in the sense that they subsume well known termination results in the literature, but also the proofs of $\TP$ via $\DC$ and $\BI$ can be seen as abstract representations of common proof techniques seen in the theory of term rewriting. 

To summarise, we have the following chain of termination principles, starting at the most general:
\begin{equation*}\begin{aligned}\DC\leftrightarrow\TP\leftrightarrow\eTP \to \; &\sTP\\ \Rightarrow \; &\mbox{Goubault-Larrecq \cite[Theorem 1]{GLar(2001.0)}}\\ \Rightarrow_{\scriptsize} \; &\mbox{Fereirra-Zantema \cite[Theorem 4]{FerZan(1995.0)}}\\
\Rightarrow \; & \mbox{multiset, lexicographic path orders etc.}\end{aligned}\end{equation*}
where $\Rightarrow$ indicates a mathematical implication which is not explicitly formalised in $\PAomega$. Note that from Fereirra-Zantema onwards, termination theorems deal specifically with terms over some signature, and are thus can be encoded using just $\TP_\nat$ of base type. 

\section{Concluding remarks}
\label{sec-conc}

It is hoped that this short note provides some insight into proof theoretic aspects of termination arguments commonly found in term rewriting and related areas, in particular their relation to choice principles.

An interesting next step would be to consider forms of higher-order recursion which constitute natural computational counterparts to our termination principles. For the axiom of open induction, a corresponding recursor called \emph{open recursion} has been considered by Berger \cite{Berger(2004.0)} and shown to give a direct realizability interpretation to $\OI[\rhd]$. 

It is easy to see that essentially the same form of recursion would give a computational interpretation to $\TP[\rhd]$: For a functional $f:\rho^\nat\to (\nat\to \rho\to \rho^\nat\to\nat)\to\nat$ satisfying the modified realizability interpretation of the premise of $\TP$ i.e.
\begin{equation*}
\begin{aligned}
\forall\alpha,\phi(&\forall n,y,\beta(\alpha_n\rhd y\to (\initSeg{\alpha}{n}\ast y\ast \beta)_{\phi ny\beta}\nsucc (\initSeg{\alpha}{n}\ast y\ast \beta)_{\phi ny\beta+1})\\
&\to \alpha_{f\alpha \phi}\nsucc \alpha_{f\alpha\phi+1})
\end{aligned}
\end{equation*}
we would have
\begin{equation*}
\forall\alpha(\alpha_{\Phi f\alpha}\nsucc \alpha_{\Phi f\alpha+1})
\end{equation*}
where $\Phi$ is the functional recursively defined by
\begin{equation*}
\Phi f\alpha=f\alpha(\lambda n,y,\beta\; . \; \Phi(\initSeg{\alpha}{n}\ast y\ast \beta)\mbox{ if $\alpha_n\rhd y$ else $0$}).
\end{equation*}
Giving a computational interpretation to $\eTP$ and $\sTP$, on the other hand, would be more difficult, since the proofs of these principles from $\TP$ seem to use classical logic in an essential way. One option would be to either use realizability together with the $A$-translation, or to consider instead the functional interpretation, which has been used to analyse the combination of open induction and classical logic in \cite{Powell(2018.0)}

Particularly intriguing would be to see whether the equivalences proven here give rise to new interdefinability results between forms of recursion as in \cite{Powell(2014.0)}. However, we leave such questions to future work.

\appendix

\section{Direct proof of Corollary \ref{cor-CDBI-TP}}
\label{sec-app}

$\DC_\rho\to \TP_\rho[\rhd]$: This uses the famous minimal bad-sequence construction. Let us call a sequence $\alpha$ \emph{bad} if $\neg\sWF[\succ](\alpha)$ holds: in other words, $\alpha$ is an infinite $\succ$-descending chain. Suppose that the premise of $\TP[\rhd]$ holds, and that for contradiction there exists at least one bad sequence. Using $\DC_\rho$ together with $\TI[\rhd]$, construct a \emph{minimal} sequence $\alpha$ as follows:
\begin{quote}Assuming we have already constructed $\seq{\alpha_0,\ldots,\alpha_{n-1}}$, choose $\alpha_n$ in such a way that $\seq{\alpha_0,\ldots,\alpha_{n-1},\alpha_n}$ extends to a bad sequence, but $\seq{\alpha_0,\ldots,\alpha_{n-1},x}$ does not for any $x\lhd\alpha_n$.\end{quote} 
For the empty sequence in the first step this is guaranteed by the initial assumption that at least one bad sequence exists. It is easy to see that $\alpha$ must satisfy $\sMIN(\alpha)$. However, $\alpha$ itself must also be bad: if on the contrary we would have $\alpha_n\nsucc\alpha_{n+1}$ for some $n$, then $\seq{\alpha_0,\ldots,\alpha_{n+1}}$ could not extend to a bad sequence, a contradiction.\\

\noindent $\BI_\rho\to \TP_\rho[\rhd]$: Define
\begin{equation*}\begin{aligned}S(a)&:\equiv (\forall n<|a|,\beta^{\rho^\nat})(\initSeg{a}{n}\blacktriangleleft\beta\wedge a_n\rhd\beta_n\to\sWF(\beta))\\
P(a)&:\equiv \forall\alpha(a\blacktriangleleft\alpha\to\sWF(\alpha)).\end{aligned}\end{equation*}
From the premise of $\TP$ we derive the three premises of $\BI$ w.r.t $P$ and $S$. Note that $S(\seq{})$ is trivially true, and if $\alpha\in S$ then this is completely equivalent to saying that $\sMIN(\alpha)$ holds, and hence $\alpha_n\nsucc\alpha_{n+1}$ for some $n$ and thus $P(\initSeg{\alpha}{n+2})$ holds.

For the third premise, take some $a\in S$ and assume that $(\forall x)(S(a\ast x)\to P(a\ast x))$. We establish $(\forall x)P(a\ast x)$ via a side induction on $\rhd$, from which we trivially obtain $P(a)$ since for any $\alpha$ with $a\blacktriangleleft\alpha$ we have $a\ast\alpha_{|a|}\blacktriangleleft\alpha$ and therefore $\sWF(\alpha)$ follows from $P(a\ast\alpha_{|a|})$.

Suppose that $(\forall y\lhd x) P(a\ast y)$ holds. Then to prove $P(a\ast x)$ it suffices to prove $S(a\ast x)$. Since we already have $a\in S$, it suffices to check the last point of the sequence i.e.
\begin{equation*}(\forall\beta)(a\blacktriangleleft\beta\wedge x\rhd\beta_{|a|}\to\sWF(\beta)).\end{equation*}
But this follows from the side induction hypothesis, setting $y:=\beta_{|a|}$, which completes the side induction. Therefore, we can now apply bar induction to obtain $P(\seq{})$ which is just $(\forall\alpha)\sWF(\alpha)$.

\bibliographystyle{plain}
\bibliography{/home/thomas/Dropbox/tp}
\end{document}